\newcommand{\remove}[1]{}
\newtheorem{theorem}{Theorem}
\newcommand\blfootnote[1]{%
  \begingroup
  \renewcommand\thefootnote{}\footnote{#1}%
  \addtocounter{footnote}{-1}%
  \endgroup
}
\begin{document}

\title{Optimal Digital Twinning of Random Systems\\ with Twinning Rate Constraints}

\author{\IEEEauthorblockN{Caglar Tunc}\\
\IEEEauthorblockA{Sabancı University, İstanbul, Türkiye\\caglar.tunc@sabanciuniv.edu}

\vspace{-9mm}
}





\maketitle
\begin{abstract}
With the massive advancements in processing power, Digital Twins (DTs) have become powerful tools to monitor and analyze physical entities. However, due to the potentially very high number of Physical Systems (PSs) to be tracked and emulated, for instance, in a factory environment or an Internet of Things (IoT) network, continuous twinning might become infeasible. In this paper, a DT system is investigated with a set of random PSs, where the twinning rate is limited due to resource constraints. Three cost functions are considered to quantify and penalize the twinning delay. For these cost functions, the optimal twinning problem under twinning rate constraints is formulated. In a numerical example, the proposed cost functions are evaluated for two, one push-based and one pull-based, benchmark twinning policies. The proposed methodology is the first to investigate the optimal twinning problem with random PSs and twinning rate constraints, and serves as a guideline for real-world implementations on how frequently PSs should be twinned.\blfootnote{This work was initially submitted when Caglar Tunc was with 6GEN Lab, Turkcell, İstanbul, Türkiye.}
\end{abstract}
\thispagestyle{firstpage}
\begin{IEEEkeywords}
Digital Twins, Beyond 5G, Adaptive Twinning; Industrial Networks
\end{IEEEkeywords}
\vspace{-3mm}
\section{Introduction}
Digital Twins (DTs) have emerged as a transformative technology, which are used to monitor, analyze and optimize physical entities. By creating virtual replicas of physical systems (PSs), DTs enable real-time data collection, simulation, and optimization, thus providing useful insights and control over complex processes. The concept of DTs, initially adopted by the manufacturing and aerospace industries, has now covered various sectors, including healthcare, urban planning, and, more importantly, the emerging field of 6G communications \cite{khan_2022}. DTs play even a more crucial role in the development of 6G networks, envisioned as a step beyond the current 5G infrastructure. DT-enabled 6G networks promise enhanced network management, predictive maintenance, and intelligent resource allocation, empowering a seamless integration of physical and digital worlds. This integration has the potential to address the increasing demand for high-speed, low-latency communications essential for applications such as autonomous driving, remote surgery, and immersive augmented reality use cases.

Despite their potential, the implementation of DTs in large-scale environments, such as factories or Internet of Things (IoT) networks, presents significant challenges. Continuous twinning of a vast number of PSs, as required for real-time monitoring and control, is often infeasible due to resource constraints. This limitation necessitates innovative strategies to optimize the twinning process, ensuring that the benefits of DTs are realized without overwhelming the available computational resources \cite{masaracchia_2022}.

The optimization of DT system synchronization under resource constraints is a critical research area, particularly in the context of 6G communications. Authors of \cite{kuruvatti_2022} provide a comprehensive survey on use cases and application areas for 6G systems with DT technology, underscoring the need for efficient algorithms and architectures to manage the high volume of data and the dynamic nature of PSs. Reference \cite{fuller_2020} discusses the enabling technologies and challenges associated with DTs, emphasizing the importance of addressing issues such as data security, interoperability, and scalability.

In industrial IoT applications, the deployment of DTs must consider the specific requirements and constraints of the environment. A higher sampling/twinning frequency means a better performance for the DT and a more accurate representation of the PS, which is a critical factor in maintaining system performance and reliability \cite{xu_2023}. However, this can be limited due to resource and energy constraints, which needs to be addressed carefully while designing the DT system with data collection pipelines.

In this paper, we investigate a DT system that serves as a digital replica of a network of PSs. The goal of the DT system is to have an accurate and recent representation of the PSs, which is achieved by sampling/twinning them as frequently as possible. Each PS is a random process by its nature, and twinning the PSs is critical for capturing the state of these random processes. We are interested in the case where the twinning rate is limited due to resource constraints; hence, the DT system might fail to capture all of the changes occurring in the PSs. We also assume that there is a synchronization delay between the DT and the corresponding PS after the twinning is initiated, due to data collection and processing times. To quantify the freshness of the DT system, we define three cost functions, each of which can be useful and appropriate for different systems, depending on the particular application area.

Our main contributions are listed as follows:
\begin{itemize}
    \item To the best of our knowledge, our work is the first to formulate an optimal twinning problem with an overall twinning rate constraint.
    \item We propose three different cost functions to quantify the freshness of a DT system that emulates a set of random PS. Each cost function corresponds to a specific way of penalizing the mismatch between the PSs and the corresponding DT system.
    \item For the defined cost functions, we formulate an optimization problem to compute optimal twinning policies that aim to minimize the long-run average cost value, given a twinning rate constraint.
\end{itemize}

\vspace{-2mm}
\section{Related Work}

The tradeoff between energy/resource consumption and twinning/sampling rate have attracted attention in the literature \cite{cakir_2023,cakir_2024,duran_2023_2,duran_2023_1,vaezi_2023,li_2024_1, aghaei_2023}. The concept of adaptive twinning was investigated in \cite{cakir_2023}, where a so-called twin alignment ratio is defined the synchronization performance of the DT is gauged according to the defined metric. In the related study \cite{cakir_2024}, authors investigate adaptive twinning for dynamic PSs under energy and resource constraints and propose a reinforcement learning (RL)-based adaptive DT model for energy management in green cities. Similarly, the study in \cite{duran_2023_2} explores mechanisms and architecture for efficient and smart data collection to avoid collection and improve DT performance. As for the timeliness and update frequency of DTs, the concept of Age of Twin (AoT) was introduced in \cite{duran_2023_1}, which underscores the importance of timely DT updates in the 6G ecosystem to ensure accurate and reliable representations of PSs. Studies in \cite{vaezi_2023} and \cite{li_2024_1} further emphasize the importance of minimizing application request delays and maintaining fresh information, respectively, to enhance user satisfaction and system performance in DT-empowered edge computing environments. Aghaei and Zhao \cite{aghaei_2023} focus on optimizing DT response times for time-sensitive applications with a Markov Decision Process (MDP) formulation in order to ensure timely and efficient DT operation. In a related line of research, authors of \cite{akar_2024} derive the optimal sampling rates for a continuous-time Markov chain (CTMC) system under total average sampling rate constraint. On the other hand, none of these studies have examined the cost-optimal digital twinning policies under twinning rate constraints, which is the main focus of this paper.

Optimization strategies for DT placement and resource allocation have also been studied \cite{gu_2024,zhang_2024,li_2024_2,lu_2021,liu_2024}. Intelligent placement algorithms for wireless DT networks using bandit learning are developed in \cite{gu_2024}, aiming to optimize DT placement dynamically. Similarly, the study in \cite{zhang_2024} focuses on cost minimization for DT placements in mobile edge computing, providing strategies to reduce latency and improve efficiency. \cite{li_2024_2} addresses mobility-aware utility maximization in DT-enabled serverless edge computing, proposing methods to optimize resource allocation in dynamic environments. Lu et al. \cite{lu_2021} investigate adaptive edge association for wireless DT networks in 6G, and propose mechanisms to enhance network performance by dynamically associating edge resources. Another study that focuses on intelligent service deployment \cite{liu_2024} explores edge-cloud collaboration for service deployment and resource allocation in DT-enabled 6G networks to balance the load between edge and cloud resources.
\vspace{-2mm}
\section{System Model}
We consider the system in Figure~\ref{fig:dt_diagram} with $K$ PSs and $K$ corresponding DTs. Each PS in a network is monitored and modeled by a corresponding DT. Due to the random nature of the environment that is modeled, physical entities that are correlated with each other are grouped into a PS, which is assumed to be evolving independently from other PSs in the network. On the other hand, we consider the scenario where with each query initiated by the monitor side, all $K$ PS are sampled/twinned simultaneously, and there is no option for individual PS sampling/twinning. Moreover, there is a fixed synchronization delay of $\Delta$ time units after the query is sent to the PS network. In other words, the most recent DT representation of a PS has an age of $\Delta$ time units. After a DT is synchronized with its corresponding PS, its state remains the same until a new synchronization query is sent to the PS network.

\begin{figure}[!t]
\centering
\includegraphics[width=2.8in]{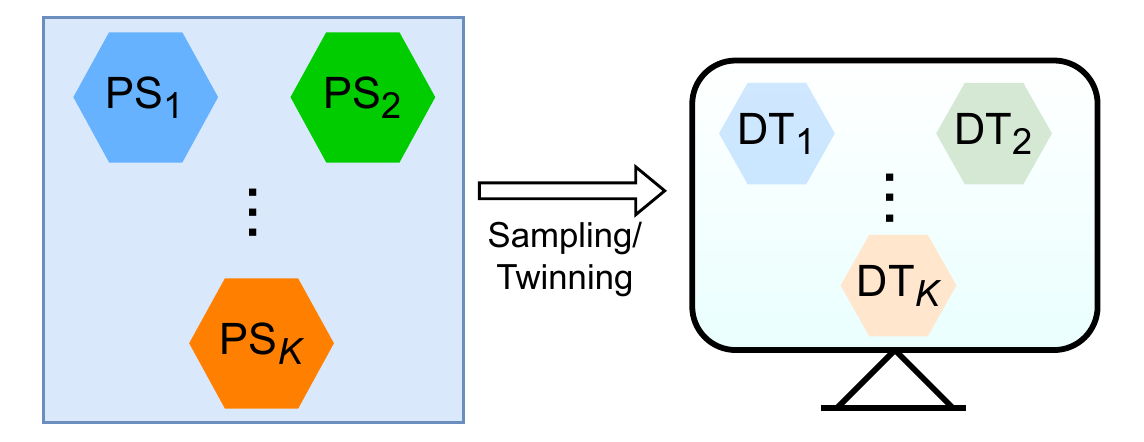}
\caption{Monitoring and twinning of the physical systems.}
\vspace{-6mm}
\label{fig:dt_diagram}
\end{figure}

Each PS is assumed to be a Markovian process with the state at time $t$ denoted by $S_i(t)$, for $i\in\{1,2,...,K\}$. The state of the PS network at time $t$ is denoted by $K$-tuple $S(t)=(S_1(t), S_2(t), ..., S_K(t))$. Similarly, we denote the state of $i$-th DT at time $t$ by $\hat{S}_i(t)$, for $k\in\{1,2,...,K\}$. The state of the DT system at time $t$ is denoted by the $K$-tuple $\hat{S}(t) = (\hat{S}_1(t), \hat{S}_2(t), ..., \hat{S}_K(t))$.

We assume a fixed synchronization delay between the PS and its corresponding DT, which is denoted by $\Delta$. With $t_0$ being the time at which the twinning query is initiated, we have the following relation:
\begin{equation}
    \hat{S}_i(t_0+\Delta) = S_i(t_0),
\end{equation}
for $i\in\{1,2,...,K\}$. The following relation holds for the state of the PS network and the DT system:
\begin{equation}
    \hat{S}(t_0+\Delta) = S(t_0).
\end{equation}
Hence, the mismatch between the PS network and the DT system can be caused by (i) synchronization delay $\Delta$, and (ii) state transitions in the PS network which are not yet twinned.

To quantify the twinning and synchronization performance, we propose to use three cost functions. For this, we first define $T_1^{(i)}$ as the first transition time for PS $i$ out of the sampled state $S_i(t_0)$, sampled at time $t_0$. Similarly, let $T_1$ denote the first transition time for PS network out of the sampled state tuple $S(t_0)=(S_1(t_0), S_2(t_0), ..., S_K(t_0))$. With these definitions, we list these cost functions as follows:
\begin{align}
    C_1(t) = \mathbf{1}(T_1<t+\Delta)\label{eq:C1}\\
    C_2(t) = \sum_{i=1}^K w_i\mathbf{1}(T^{(i)}_1<t+\Delta)\label{eq:C2}\\
    C_3(t) = f(S(t),\hat{S}(t))\label{eq:C3},
\end{align}
where $C_j(t)$ denotes the cost value at time $t$ for cost function $j$, for $j\in\{1,2,3\}$. We explain and provide motivation for using each cost function as follows:
\begin{itemize}
    \item $C_1(t)$ tracks the system state, and whenever an individual PS transitions out of its sampled state, it takes the value 1. Even if the PS returns to its sampled state after leaving it, we assume that the cost is still 1 since at least two PS state transitions have been missed by the DT system.
    \item $C_2(t)$ is similar to weighted Hamming distance, with only difference being that once the state of a PS changes, the cost becomes non-negative until next twinning incident, while for Hamming distance, it becomes when the PS returns to the sampled state. In $C_2(t)$, each PS is weighed differently while calculating the total cost. This cost function can fit well to scenarios where different PSs in the network have different importance, and missing the state transitions occurring in these states should be penalized more harshly than others.
    \item $C_3(t)$ can take an arbitrary function to quantify the difference the states of two processes. Some commonly used examples for the distance function $f(\cdot)$ in (\ref{eq:C3}) are generalized/weighted Hamming distance, Euclidean distance, Manhattan distance, Chebyshev distance, and cosine similarity \cite{aggarwal_2013}. Depending on the system in hand, any distance function can be utilized to gauge how much the system has iterated since the last twinning instance.
\end{itemize}

\begin{figure}[!t]
\centering
\includegraphics[width=2.6in]{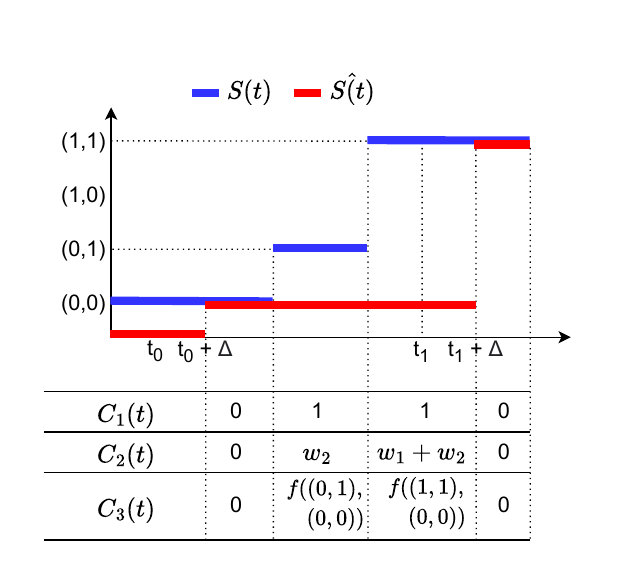}
\caption{Sample path for the states of the PSs and DTs, and the corresponding values of the cost functions.}
\vspace{-5mm}
\label{fig:sample_path}
\end{figure}

We illustrate a sample path of $S(t)$ and $\hat{S}(t)$ with $K=2$ and $S_i\in\{0,1\}$, for $i=\{1,2\}$ in Figure~\ref{fig:sample_path}. $t_0$ and $t_1$ denote the first and second twinning instances, respectively. With the first twinning instance, $\hat{S}(t)$ sets its value to $S(t_0)$ at time $t=t_0 + \Delta$, which is the twinning instance plus the synchronization delay. Until the first transition of $S(t)$ from $(0,0)$ to $(0,1)$, all cost functions are set to 0. When the transition occurs, $C_1(t)$, $C_2(t)$, and $C_3(t)$ are calculated according to (\ref{eq:C1}), (\ref{eq:C2}), and (\ref{eq:C3}), respectively. Similarly, with the transition of $S(t)$ from $(0,1)$ to $(1,1)$, cost values are updated. Finally, with the second twinning instance, after the synchronization delay, the DT system is synchronized and the cost values are set to 0.

For a given PS network and associated cost function, it is extremely useful to determine how the cost grows in time, especially when the twinning rate is limited due to resource constraints. Therefore, next, we analyze the expected values of each cost function. The expected value of cost function $C_1(t)$ is given by the following theorem.

\vspace{-2mm}
\begin{theorem}
\label{th:theorem1}
Assuming an initial twinning instance of $t_0=0$, expected value of $C_1(t)$ is given by the following expression:
\begin{equation}
    \mathbb{E}[C_1(t)] = 1-\prod_{i=1}^K \sum_k \pi_{ik} (e^{Q_i(t+\Delta)})_{kk},
\end{equation}
where $\mathbb{E}[\cdot]$ denotes the expectation operator, $\pi_{ik}$ is the steady-state probability of state $k$ for PS $i$, $Q_i$ is the transition rate matrix for PS $i$, and $(e^{(\cdot)})_{kk}$ is the entry at the $k$-th row and $k$-th column of the exponential matrix $e^{(\cdot)}$, for $i\in\{1,2,.,,,.K\}$.
\end{theorem}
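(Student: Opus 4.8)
The plan is to compute $\mathbb{E}[C_1(t)]$ directly from its definition as a probability, since $C_1(t)=\mathbf{1}(T_1<t+\Delta)$ is an indicator and hence $\mathbb{E}[C_1(t)]=\Pr(T_1<t+\Delta)$. The key observation is that $T_1$, the first transition time of the whole PS network out of its sampled state tuple $S(0)$, exceeds a threshold precisely when \emph{every} PS individually remains in its own sampled state for that long. Because the PSs are assumed to evolve independently, this lets me factor the survival probability $\Pr(T_1 \ge t+\Delta)$ into a product over $i\in\{1,\dots,K\}$ of the per-PS survival probabilities $\Pr(T_1^{(i)} \ge t+\Delta)$. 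I would then write $\mathbb{E}[C_1(t)] = 1 - \Pr(T_1\ge t+\Delta) = 1 - \prod_{i=1}^K \Pr\bigl(T_1^{(i)}\ge t+\Delta\bigr)$.

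The next step is to evaluate each per-PS factor. For a single CTMC $i$ started (at the sampling instant $t_0=0$) in state $k$, the event $\{T_1^{(i)}\ge \tau\}$ is the event that the chain has not yet left state $k$ by time $\tau$; equivalently, conditioned on starting in $k$, the chain is still in $k$ and has never left. I would argue that the probability of remaining in the initial state $k$ throughout $[0,\tau]$ equals $e^{q_{kk}^{(i)}\tau}$, the holding-time survival function, where $q_{kk}^{(i)}$ is the $(k,k)$ diagonal entry (the negative total exit rate) of $Q_i$. The subtlety to address is matching this to the stated closed form $(e^{Q_i\tau})_{kk}$: the diagonal entry of the matrix exponential $(e^{Q_i\tau})_{kk}$ is the probability of being in state $k$ at time $\tau$ having started in $k$, which includes paths that leave and return, whereas the no-transition survival is $e^{q_{kk}^{(i)}\tau}$. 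I would reconcile the two by invoking the cost-function semantics introduced with equation~(\ref{eq:C1}): once a PS transitions out of its sampled state the cost is set to $1$ and stays there even upon return, so the relevant ``no-mismatch'' event is really captured by the first-transition time, and I would make explicit whichever identification the paper intends, flagging $(e^{Q_i\tau})_{kk}$ as the chosen per-PS survival/return quantity.

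Finally, I would weight over the initial sampled state. Since the sampling occurs in the stationary regime (the initial twinning instance $t_0=0$ sees PS $i$ distributed according to its steady-state distribution $\pi_{i\cdot}$), the unconditional per-PS factor is obtained by averaging the conditional survival over $k$ with weights $\pi_{ik}$, giving $\sum_k \pi_{ik}\,(e^{Q_i(t+\Delta)})_{kk}$. Substituting $\tau = t+\Delta$ and collecting the product over all $K$ independent PSs yields exactly
\begin{equation}
    \mathbb{E}[C_1(t)] = 1-\prod_{i=1}^K \sum_k \pi_{ik}\,(e^{Q_i(t+\Delta)})_{kk},\nonumber
\end{equation}
as claimed. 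The main obstacle I anticipate is the second step: carefully justifying the per-PS factor and pinning down the precise probabilistic meaning of $(e^{Q_i(t+\Delta)})_{kk}$ relative to the first-transition-time event $\{T_1^{(i)}\ge t+\Delta\}$, since the diagonal of the matrix exponential and the pure holding-time survival coincide only when returns to the sampled state are treated in a particular way; the independence factorization and the stationary averaging are comparatively routine.
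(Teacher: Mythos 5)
Your proposal follows essentially the same route as the paper's own proof --- indicator to probability, complement, factorization over the $K$ independent PSs, and stationary averaging over the sampled state --- so on those steps there is nothing to add. The real value of your write-up is the subtlety you flag in the second step, and you should know it is not a pedantic worry: it is a genuine inconsistency between the statement of Theorem~\ref{th:theorem1} and the definition of $C_1(t)$ in (\ref{eq:C1}).

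Concretely, the paper's proof introduces $P_{kk}(\tau)$ as the probability that PS $i$ \emph{remains} in state $k$ for $\tau$ time units (i.e., the survival function of $T_1^{(i)}$ given $S_i(0)=k$, which is what the step from (\ref{eq:10}) to (\ref{eq:11}) requires), and then in the step from (\ref{eq:11}) to (\ref{eq:12}) replaces it by $(e^{Q_i\tau})_{kk}$, citing a standard reference. As you observe, these are different quantities: the no-transition survival is $e^{q^{(i)}_{kk}\tau}$, where $q^{(i)}_{kk}\le 0$ is the $k$-th diagonal entry of $Q_i$, while $(e^{Q_i\tau})_{kk}$ is the probability of \emph{occupying} state $k$ at time $\tau$, which also counts paths that leave and return. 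For the paper's own $Q_1$ (exit rate $1$ from its first state), the former is $e^{-\tau}$ and the latter is $\tfrac{2}{3}+\tfrac{1}{3}e^{-3\tau}$, so they disagree for every $\tau>0$. Since the text accompanying (\ref{eq:C1}) insists that the cost stays at $1$ even after the PS returns to its sampled state, the argument you (and the paper, up to (\ref{eq:11})) develop actually yields
\[
\mathbb{E}[C_1(t)] \;=\; 1-\prod_{i=1}^K \sum_k \pi_{ik}\, e^{q^{(i)}_{kk}(t+\Delta)},
\]
whereas the formula displayed in the theorem, $1-\prod_{i=1}^K\sum_k \pi_{ik}\,(e^{Q_i(t+\Delta)})_{kk}$, equals $P\bigl(S(t+\Delta)\neq S(0)\bigr)$, i.e., the expectation of the Hamming-type mismatch indicator in which a return to the sampled state resets the cost to $0$. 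So the identification you were hoping to pin down cannot be made: either the theorem should be restated with $e^{q^{(i)}_{kk}(t+\Delta)}$ in place of $(e^{Q_i(t+\Delta)})_{kk}$, or $C_1$ should be redefined as a current-state mismatch indicator. Your proof, completed with the holding-time survival, establishes the corrected statement; the paper's error is buried precisely in the step from (\ref{eq:11}) to (\ref{eq:12}) that you refused to take on faith.
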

\begin{proof}
    Taking the expected value of both sides of (\ref{eq:C1}) yields:
    \begin{align}
        \mathbb{E}[C_1(t)] &= \mathbb{E}[\mathbf{1}(T_1<t+\Delta)]\\
        & = P(T_1<t+\Delta)\\
        & = 1-P(T_1\geq t+\Delta)\label{eq:9}\\
        & = 1- \prod_{i=1}^K P(T^{(i)}_1\geq t+\Delta)\label{eq:10}\\
        & = 1- \prod_{i=1}^K \sum_k \pi_{ik} P_{kk}(t+\Delta)\label{eq:11}\\
        & = 1- \prod_{i=1}^K \sum_k \pi_{ik} (e^{Q_i(t+\Delta)})_{kk}\label{eq:12}.
    \end{align}
    $P_{kk}(t)$ denotes the probability that the system remains in state $k$ $t$ time units after initialled sampled at time $t_0=0$. (\ref{eq:10}) from (\ref{eq:9}) follows from the assumption that PSs are independent, and (\ref{eq:12}) from (\ref{eq:11}) from \cite{gagniuc_2017}, which completes the proof.
\end{proof}

Similarly for $C_2(t)$, we use the following theorem to compute the expected value.
\begin{theorem}
Assuming an initial twinning instance of $t_0=0$, expected value of $C_2(t)$ is given by the following expression:
\begin{equation}
    \mathbb{E}[C_2(t)] = \sum_{i=1}^{K} w_i\left(1-\sum_k \pi_{ik} (e^{Q_i(t+\Delta)})_{kk}\right).
\end{equation}
\end{theorem}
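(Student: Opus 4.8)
The plan is to follow the same template as the proof of Theorem~\ref{th:theorem1}, but with one structural simplification: because $C_2(t)$ is already written as a weighted sum of per-PS indicators rather than as a single indicator for the whole network, I can apply linearity of expectation directly instead of exploiting independence through a product.

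First I would take expectations on both sides of (\ref{eq:C2}) and move the expectation inside the finite sum, obtaining
$$\mathbb{E}[C_2(t)] = \sum_{i=1}^K w_i \, P(T_1^{(i)} < t+\Delta).$$
The deterministic weights $w_i$ factor out, and crucially this step needs no assumption about how the per-PS transition times $T_1^{(i)}$ relate to one another.

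Next I would write each probability in complementary form, $P(T_1^{(i)}<t+\Delta)=1-P(T_1^{(i)}\geq t+\Delta)$, and reuse, essentially verbatim, the per-PS survival-probability computation already carried out between (\ref{eq:11}) and (\ref{eq:12}). Conditioning on the sampled state $k$, which is drawn according to the steady-state distribution $\pi_{ik}$, and noting that the probability of PS $i$ staying in state $k$ for $t+\Delta$ time units equals the diagonal entry $(e^{Q_i(t+\Delta)})_{kk}$, gives $P(T_1^{(i)}\geq t+\Delta)=\sum_k \pi_{ik}(e^{Q_i(t+\Delta)})_{kk}$. Substituting this back yields the stated formula.

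I do not expect a genuine technical obstacle, since every ingredient is already in place from Theorem~\ref{th:theorem1}. The one point I would take care to emphasize is conceptual rather than computational: for $C_1$ the relevant time $T_1$ was the first transition of the \emph{entire} network, which forced a product over the PSs via their independence, whereas here each summand concerns a \emph{single} PS, so linearity of expectation replaces that product by a sum and no independence assumption is invoked. The main effort is thus in clearly explaining why the product appearing in Theorem~\ref{th:theorem1} does not reappear here.
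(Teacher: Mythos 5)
Your proof is correct and follows essentially the same route as the paper's own proof: linearity of expectation applied to the weighted sum, rewriting each term as a complementary probability, and reusing the per-PS survival computation $P(T_1^{(i)}\geq t+\Delta)=\sum_k \pi_{ik}(e^{Q_i(t+\Delta)})_{kk}$ from Theorem~\ref{th:theorem1}. Your remark that no independence assumption is needed is in fact a small improvement in precision over the paper, whose proof attributes the interchange of expectation and summation to ``the independence of PSs'' when plain linearity of expectation suffices.
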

\begin{proof}
    Taking the expected value of both sides of (\ref{eq:C1}) yields:
    \begin{align}
        \mathbb{E}[C_2(t)] &= \mathbb{E}\left[\sum_{i=1}^K w_i\mathbf{1}(T^{(i)}_1<t+\Delta)\right]\label{eq:14}\\
        &= \sum_{i=1}^K w_i\mathbb{E}\left[\mathbf{1}(T^{(i)}_1<t+\Delta)\right]\label{eq:15}\\
        &= \sum_{i=1}^K w_i\left(1-P(T^{(i)}_1\geq t+\Delta)\right)\label{eq:16}\\
        &= \sum_{i=1}^K w_i\left(1-\sum_k \pi_{ik} (e^{Q_i(t+\Delta)})_{kk}\right)\label{eq:17},
    \end{align}
    where we used the independence of PSs going from (\ref{eq:14}) to (\ref{eq:15}), and similar arguments as in the proof of Theorem~\ref{th:theorem1} going from (\ref{eq:16}) to (\ref{eq:17}).
\end{proof}

The expected cost value of the cost function $C_3(t)$ depends on the particular choice of distance function $f(\cdot)$. As an example, we have derived the expected cost value for the distance function $f(\cdot)$ being a slight variation of the generalized Hamming distance in Theorem 2. We leave the analysis of other distance functions to future work.

\section{Optimal Monitoring Problem Formulation}
Let $\lambda_i$ denote the $i$-th twinning instance. We denote the average twinning rate until time $t$ by $\lambda(t)$, which is computed as follows:
\begin{equation}
    \lambda(t) = \frac{1}{t}\sum_{i=1}^{\infty} \mathbf{1}(\lambda_i \leq t).
\end{equation}
Similarly, we calculate the long-run average twinning rate, denoted by $\lambda_{avg}$, as follows:
\begin{equation}
    \lambda_{avg} = \lim_{T\rightarrow \infty}\frac{1}{T}\sum_{i=1}^{\infty} \mathbf{1}(\lambda_i \leq T).
\end{equation}

To formulate the optimal twinning policy problem, for a chosen cost function $C_i(t)$ for $i\in\{1,2,3\}$, we denote the expected cost when twinning policy $\pi$ is followed by $\mathbb{E}^{\pi}[C_i(t)]$. We have the following optimization problem that gives the optimal twinning policy:

\begin{subequations}
\begin{alignat}{2}
&\! \underset{\mathcal{\pi\in\mathcal{P}}}{\text{minimize}}     &\qquad& \mathbb{E}^{\pi}[C_i(t)]\label{eq:objective}\\
&\text{subject to} &      & \lambda_{avg}\leq \Lambda,\label{eq:constraint1}
\end{alignat}
\end{subequations}
where $\mathcal{P}$ denotes the set of feasible twinning policies, and $\Lambda$ denotes the constraint on the average twinning rate.

Finding the optimal twinning policy requires solution of (\ref{eq:objective}), given the underlying distributions for the PSs. If the infinitesimal generator matrices of PSs 1 through $K$ are given, then the entire system can be formulated as an MDP, and the optimal policy can be derived by using standard algorithms such as value-\cite{tunc_2019} of policy-iteration \cite{zhou_2019}. In the numerical example, instead of solving for the optimal policy, we focus on two sampling policies that guarantee that the twinning rate constraint (\ref{eq:constraint1}) is satisfied: (i) pull-based random twinning policy (PRTP), which samples the PS network based on random twinning queries generated at the monitor side, and (ii) push-based probabilistic twinning policy (PPTP), which is initiated at the PS network side depending on the state transitions of the PS network.

\section{Numerical Example}
To illustrate the defined cost functions and two benchmark twinning policies, we consider a simple scenario with $K=2$ PSs. For the infinitesimal generator matrices $Q_1$ and $Q_2$, we set the following values:

\begin{equation}
    Q_1=\begin{bmatrix}
  -1 & 1\\
  2 & -2
\end{bmatrix},
Q_2\begin{bmatrix}
  -3 & 3\\
  6 & -6
\end{bmatrix}.
\end{equation}
Moreover, we set the weight matrix for cost function $C_2(t)$ as $w = [5,1]$. As cost function $C_3(t)$, we consider the Euclidean distance function, which is expressed as follows:
\begin{equation}
    f\left(S(t),\hat{S}(t)\right) = \sum_{i=1}^K \sqrt{\left(S_i(t)-\hat{S}_i(t)\right)^2}.
\end{equation}

The first twinning policy we investigate is PRTP, which is triggered on the monitor side according to an exponential random variable with rate $\lambda_{avg}$. Therefore, it is guaranteed that on average, twinning rate will converge to $\lambda_{avg}$. The second policy we use as a benchmark is PPTP, which is triggered probabilistically on the PS network side with each state transition. To guarantee converging to $\lambda_{avg}$ on the long-run, we set the probability of twinning with each state transition as follows:
\begin{equation}
    p_t = \min(1,\lambda_{avg}/\sigma),
\end{equation}
where $\sigma$ is the total transition rate for both PSs, which is calculated as follows:
\begin{equation}
    \sigma = \pi_{11}r_{11} + \pi_{12}r_{12} + \pi_{21}r_{21} + \pi_{22}r_{22},
\end{equation}
where $r_{ij}$ denote the total transition rate out of state $j$ for PS $i$. For this particular example, solving for $\pi_{ij}$ yields $\pi_{11}=\pi_{21}=2/3$ and $\pi_{12}=\pi_{22}=1/3$, which is simply derived from the steady-state solution of a two-state Markov chain. Hence, $\sigma=4/3+4=16/3$, and $p_t=3/16$.

We vary the synchronization delay $\Delta$ between 0 (i.e., instantaneous synchronization) and 0.6, and the twinning rate between 1 and 30. Average values of the cost functions $C_1(t)$, $C_2(t)$, and $C_3(t)$, denoted by the expectation operator $\mathbb{E}[\cdot]$, are illustrated in subplots (a), (b), and (c), respectively, of Figure~\ref{fig:ex1} and \ref{fig:ex2} for PRTP and PPTP, respectively. The first observation we make is that PPTP outperforms PRTP in general, since it is a push-based policy that exploits the knowledge of state transitions of the PS network. Another observation is that as the synchronization delay $\Delta$ gets larger, twinning actions might fail to synchronize the DT system with the PS network. This is because the probability that there is a state transition within the PS network during the synchronization gets larger, and hence, the synchronization for the latest twinning incident fails. One interesting behavior we observe from Figures~\ref{fig:ex1} and \ref{fig:ex2} is that, the average cost values do not always decrease monotonically with the increasing twinning rate $\lambda_{avg}$. The main reason behind this phenomenon is that, especially when the synchronization delay is high, the adopted twinning policy should wait until the most recent synchronization completes. Otherwise, the synchronization performance of the DT system can actually be deteriorated. This underscores the need for analyzing the system constraints and designing intelligent twinning policies according to the random system and constraints in hand.

\begin{figure*}[!t]
\centering
\includegraphics[width=6.3in]{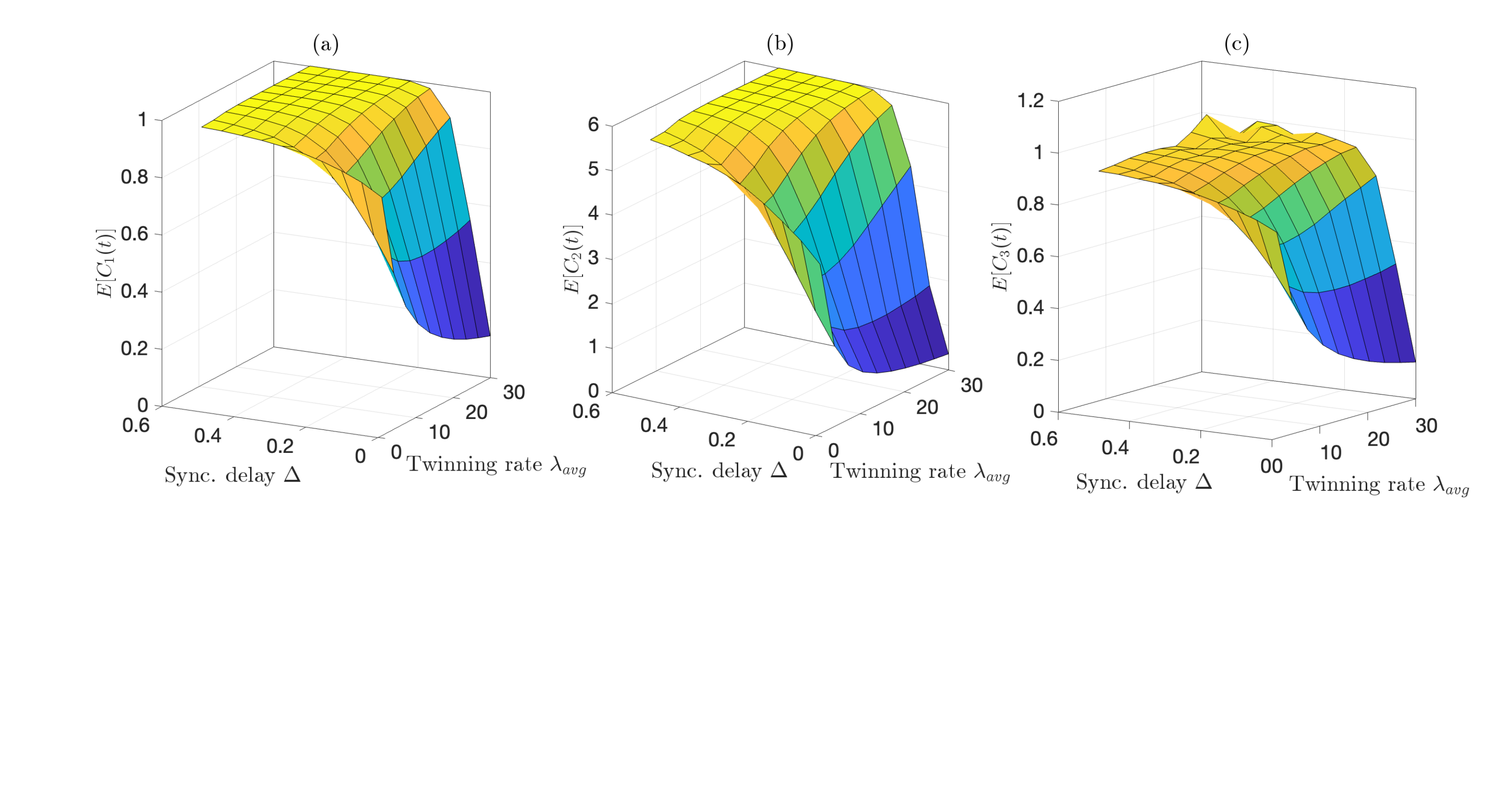}
\caption{Average values for the cost functions (a) $C_1(t)$, (b) $C_2(t)$, and (c) $C_3(t)$, for varying values of synchronization delay $\Delta$ and average twinning rate $\lambda_{avg}$, for PRTP.}
\label{fig:ex1}
\end{figure*}
\begin{figure*}[!t]
\centering
\includegraphics[width=6.3in]{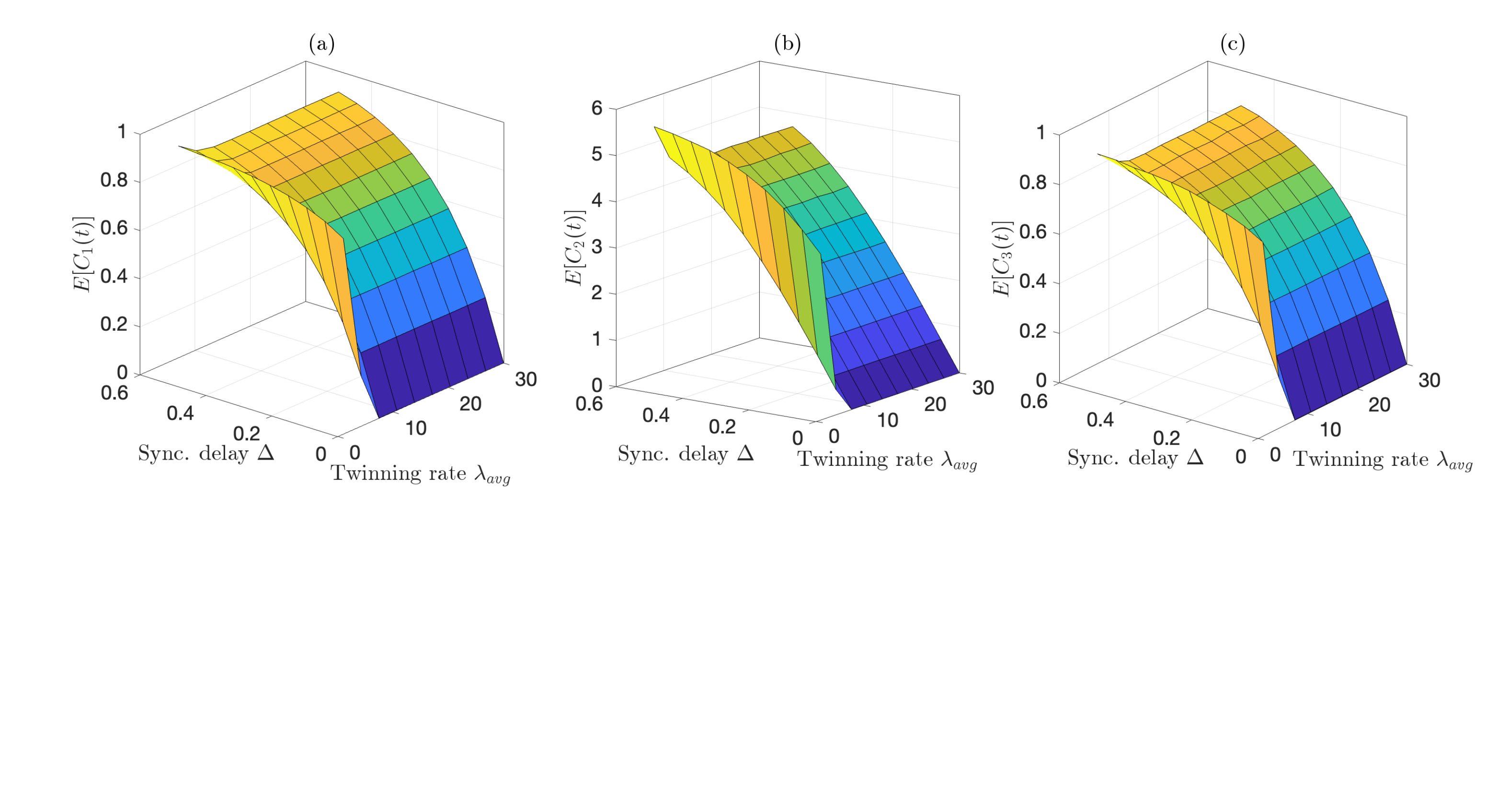}
\caption{Average values for the cost functions (a) $C_1(t)$, (b) $C_2(t)$, and (c) $C_3(t)$, for varying values of synchronization delay $\Delta$ and average twinning rate $\lambda_{avg}$, for PPTP.}
\label{fig:ex2}
\end{figure*}

\section{Conclusion}
In this paper, we considered a Digital Twin (DT) system, modeling/emulating a Physical System (PS) network under twinning rate constraints. By defining three cost functions to quantify and penalize twinning delays and model mismatch between the PSs and DTs, we formulated the optimal twinning problem and outlined a framework for determining twinning strategies for the PSs. Our methodology introduces a novel approach to tackle the optimal twinning problem with random PSs and twinning rate constraints, offering practical guidelines for real-world DT deployments.

Our approach balances the benefits of continuous twinning with resource limitations, offering a practical solution adaptable to industrial and IoT environments. Future research will explore specific use cases and AI integration for determining the optimal twinning policies, and develop adaptive twinning algorithms to enhance DT systems' capabilities.

This study provides a foundational framework for optimizing DT systems under resource constraints, bridging the gap between theoretical research and practical implementation, and paving the way for more efficient industrial processes and communication networks. Future research will focus on solving for the optimal twinning policy for a given set of random distribution for the PSs. When the dynamics of these distributions are known, MDP formulation will be used to obtain the optimal policies. Otherwise, RL will be utilized to learn the optimal twinning policies that minimize the desired cost function.

\bibliographystyle{IEEEtran}
\bibliography{IEEEabrv,whole}

\end{document}